\newcommand{\remove}[1]{}
\newtheorem{theorem}{Theorem}[section]
\newtheorem{lemma}[theorem]{Lemma}
\newcommand{\comment}[1]{}
\newcommand{\junk}[1]{}
\def\veps{\ensuremath{\varepsilon}}
\begin{document}
{{
\title{Sparse Suffix Tree Construction with Small Space}
}
\author{
Philip Bille\thanks{Technical University of Denmark, Email: phbi@imm.dtu.dk} \and Inge Li G\o rtz \thanks{Technical University of Denmark, Email: ilg@imm.dtu.dk} \and Tsvi Kopelowitz \thanks{Weizmann Institute of Science, Email: kopelot@gmail.com}\and
Benjamin Sach\thanks{University of Warwick, Email: sach@dcs.warwick.ac.uk} \and Hjalte Wedel Vildh{\o}j\thanks{Technical University of Denmark, Email: hwvi@imm.dtu.dk}
%
}

\date{}

\maketitle

\begin{abstract}

We consider the problem of constructing a sparse suffix tree (or suffix array) for $b$ suffixes of a given text $T$ of size $n$, using only $O(b)$ words of space during construction time. Breaking the naive bound of $\Omega(nb)$ time for this problem has occupied many algorithmic researchers since a different structure, the (evenly spaced) sparse suffix tree, was introduced by K{\"a}rkk{\"a}inen and Ukkonen in 1996. While in the evenly spaced sparse suffix tree the suffixes considered must be evenly spaced in $T$, here there is no constraint on the locations of the suffixes.

We show that the sparse suffix tree can be constructed in $O(n\log^2b)$ time. To achieve this we develop a technique, which may be of independent interest, that allows to efficiently answer $b$ longest common prefix queries on suffixes of $T$, using only $O(b)$ space. We expect that this technique will prove useful in many other applications in which space usage is a concern. Furthermore, additional tradeoffs between the space usage and the construction time are given.
\end{abstract}

\section{Introduction}

In the sparse suffix tree\footnote{All of the results within apply to the sparse suffix array as well.} problem we are given a string $T=t_1\cdots t_n$ of size $n$, and a list of $b$ \emph{interesting} indices of $T$. The goal is to construct the suffix tree for only those $b$ indices, while using little space during the construction process, which will hopefully be $O(b)$ words. Such a construction can be helpful in situations where an extremely large string is saved in read only memory, and we are interested in indexing only a small set of its suffixes, or if the index of all of the text cannot all fit in available memory. Natural examples are indexing a genomic sequence where only part of the locations are of interest for searching for a given gene, or indexing a book where we are only interested in appearances of a pattern which is a beginning of a paragraph, sentence, or word.

A naive algorithm with $O(nb)$ running time can be easily produced by inserting each suffix one at a time into the suffix tree. However, breaking the naive bound has been a problem that has baffled many algorithmic researchers since a similar flavored problem was introduced by K{\"a}rkk{\"a}inen and Ukkonen in\cite{KU96}. The authors there introduced the sparse suffix tree, and showed an efficient construction for the evenly spaced sparse suffix tree, which is a suffix tree for every $k^{\text{th}}$ suffix in the text. In addition, they discussed how to search for a pattern in a sparse suffix tree, and those ideas were later improved by Kolpakov et. al. in~\cite{KKS11}. However, the question of constructing a sparse suffix tree with no restriction on the sparseness while breaking the naive bound remained open. It should be noted that an efficient solution for a suffix tree on words was already introduced by Andersson et.al in~\cite{ALS99}, and later extended to suffix arrays on words by Ferragina and Fischer in~\cite{FF07}, but their model is restrictive as it assumes that there is a delimiter after each word. In the sparse suffix tree there is no such assumption, and hence it is more general.

\paragraph{Results}
We are the first to break the naive $O(nb)$ algorithm for general sparse suffix trees, by showing how to construct a sparse suffix tree in $O(n\log^2b)$ time, using only $O(b)$ words of space. To achieve this, we develop a novel technique for performing efficient batched longest common prefix (LCP) queries, using little space. In particular, we show how to answer a batch of $b$ LCP queries using only $O(b)$ words of space, in $O(n\log b)$ time. This technique may be of independent interest, and we expect it to be helpful in other applications in which space usage is a factor. In addition, we show some tradeoffs of construction time and space usage, which are based on time-space tradeoffs of the batched LCP queries. In particular we show that using $O(b\alpha)$ space the construction time is reduced to $O(n\frac{\log^2 b}{\log \alpha} + \frac{\alpha b\log^2 b}{\log \alpha})$. So, for example, if $\alpha = b^\veps$ for a small constant $\veps >0$, then the cost for constructing the sparse suffix tree becomes $O(\frac{1}{\veps}(n\log b + b^{1+\veps}\log b))$, using $O(b^{1+\veps})$ words of space.

\section{Preliminaries}
For a string $T=t_1\cdots t_n$ of size $n$, denote by $T_{i} = t_i\cdots t_n$ the $i^{\text{th}}$ suffix of $T$. The LCP of two suffixes $T_i$ and $T_j$ is denoted by $LCP(T_i,T_j)$, but we will slightly abuse notation and write $LCP(i,j) = LCP(T_i,T_j)$. We denote by $T_{i,j}$ the substring $t_i\cdots t_j$.

We assume the reader is familiar with the suffix tree data structure. For any node $u$ in a (sparse) suffix tree, let $length(u)$ denote the length of the substring corresponding path from the root of the suffix tree to $u$.

\paragraph{Fingerprinting}
We make use of the fingerprinting techniques of Rabin and Karp from~\cite{KR87}. We assume that $T$ is over the integer alphabet $\Sigma = \{1,2,\cdots \sigma \}$, as this will be needed for the fingerprinting. If this is not the case, then we can use perfect hashing (For example a $2$--wise independent hash function into the integers bounded by $\sigma^c$ for some constant $c$ will suffice) for the purpose of the fingerprinting, which works with high probability. This suffices as for fingerprinting purposes we only care if strings are equal, and not about their lexicographical order.

Let $p$ be a prime between $2$ and $n^2$. A fingerprint for a substring $T_{i,j}$, denoted by $\textit{FP}[i,j]$, is the number $\sum_{k=1}^{j} \sigma^{j-k}\cdot t_k \mod p$. Two equal substrings will always have the same fingerprint, however the converse is not true. Luckily, it can be shown that the probability of any two different substrings having the same fingerprint is at most by $n^{-O(1)}$~\cite{KR87}. The exponent in the polynomial can be amplified by a standard constant number of repetitions.

We utilize two important properties of fingerprints. The first is that $\textit{FP}[i,j+1]$ can be computed from $\textit{FP}[i,j]$ in constant time. This is done by the formula $\textit{FP}[i,j+1] = \textit{FP}[i,j]\cdot \sigma + t_{j+1} \mod p$. The second is that the fingerprint of $T_{k,j}$ can be computed in $O(1)$ time from the fingerprint of $T_{i,j}$ and $T_{i,k}$, for $i\leq k \leq j$. This is done by the formula $\textit{FP}[k,j] = \textit{FP}[i,j] - \textit{FP}[i,k]\cdot \sigma^{j-k} \mod p$. Notice however that in order to perform this computation, we must have stored $\sigma^{j-k} \mod p$ as computing it on the fly may be costly.

Our algorithm will be using fingerprinting, and therefore will be correct with high probability. Being that the running time is polynomial in $n$, is is possible to guarantee that the algorithm works with probability at least $1-n^{-O(1)}$, via repeating the fingerprints enough times (but still constant), and the union bound.

\section{Batch LCP Queries}\label{section:batched_lcp}

\subsection{The Algorithm}
Given a string $T$ of size $n$ and a list of $b$ pairs of indices $P$, we wish to compute $LCP(i,j)$ for all $(i,j)\in P$. To do this we perform $\log b$ rounds of computation, where at the $k^{\text{th}}$ round the input is a set of $b$ pairs denoted by $P_k$, where we are guaranteed that for any $(i,j)\in P_k, LCP(i,j)\leq 2^{\log n - (k-1)}$. The goal of the $k^{\text{th}}$ iteration is to decide for any $(i,j)\in P_k$ if $LCP(i,j)\leq 2^{\log n -k}$ or not. In addition, the $k^{\text{th}}$ round will prepare $P_{k+1}$, which is the input for the $(k+1)^{\text{th}}$ round. To begin the execution of the procedure we set $P_0 = P$, as we are always guaranteed that for any $(i,j)\in P$, $LCP(i,j) \leq n = 2^{\log n}$. We will first provide a description of what happens during each of the $\log b$ rounds, and after we will explain how the algorithm uses $P_{\log b}$ to derive $LCP(i,j)$ for all $(i,j)\in P$.

\paragraph{A Single Round}
The $k^{\text{th}}$ round, for $1\leq k \leq \log b$, is executed as follows. We begin by constructing the set $L=\bigcup_{(i,j)\in P_k} \{i-1,j-1,i+2^{\log n -k}, j+2^{\log n -k}\}$ of size $4b$, and construct a perfect hash table for the values in $L$, using a $2$-wise independent hash function into a world of size $b^c$ for some constant $c$ (which with high probability guarantees that there are no collisions). Notice if two elements in $L$ have the same value, then we store them in a list at their hashed value. In addition, for every value in $L$ we store which index created it, so for example, for $i-1$ and $i+2^{\log n -k}$ we remember that they were created from $i$.

Next, we scan $T$ from $t_1$ till $t_n$. When we reach $t_\ell$ we compute $\textit{FP}[1,\ell]$ in constant time from $\textit{FP}[1,{\ell-1}]$. In addition, if $\ell \in L$ then we store $\textit{FP}[1,\ell]$ together with $\ell$ in the hash table. Once the scan of $T$ is completed, for every $(i,j)\in P_k$ we compute $\textit{FP}[i,{i+2^{\log n -k}}]$ in constant time, as we stored $\textit{FP}[1,{i-1}]$ and $\textit{FP}[1,{i+2^{\log n -k}}]$. Similarly we compute $\textit{FP}[j,{j+2^{\log n -k}}]$. Notice that to do this we need to compute $\sigma{2^{\log n -k}} \mod p= \sigma^{\frac{n}{2^k}}$ in $O(\log n -k)$ time which can be easily afforded within our bounds, as one computation suffices for all pairs.

If $\textit{FP}[i,{i+2^{\log n -k}}]\neq \textit{FP}[j,{j+2^{\log n -k}}]$ then it must be that $LCP(i,j) < 2^{\log n -k}$, and so we add $(i,j)$ to $P_{k+1}$. Otherwise, with high probability $LCP(i,j) \geq 2^{\log n -k}$ and so we add $(i+2^{\log n +k},j+2^{\log n +k})$ to $P_{k+1}$. Notice there is a natural bijection between pairs in $P_{k-1}$ and pairs in $P$ following from the method of constructing the pairs for the next round. For each pair in $P_{k+1}$ we will remember which pair in $P$ originated it, which can be easily transferred when $P_{k+1}$ is constructed from $P_k$.

\paragraph{LCP on Small Strings}
After the $\log b$ rounds have taken place, we know that for every $(i,j)\in P_{\log b}$, $LCP(i,j)\leq 2^{\log n - \log b} = \frac nb$. For each such pair, we spend $O(\frac nb)$ time in order to exactly compute $LCP(i,j)$. Notice that this is performed for $b$ pairs, so the total cost is $O(n)$ for this last phase. We then construct $P_{\textit{final}} = \{(i+LCP(i,j), j+LCP(i,j)) : (i,j)\in P_{\log b}\}$. For each $(i,j)\in P_{\textit{final}}$ denote by $(i_0,j_0)\in P$ the pair which originated $(i,j)$. We claim that for any $(i,j)\in P_{\textit{final}}$, $LCP(i_0,j_0) = i - i_0$.

\subsection{Runtime and Correctness}
Each round takes $O(n)$ time, and the number of rounds is $O(\log b)$ for a total of $O(n\log b)$ for all rounds. In addition, the work executed for computing $P_{\textit{final}}$ is an additional $O(n)$.

The following lemma on LCPs will be helpful in proving the correctness of the batched LCP query.
\begin{lemma}\label{lemma:LCP_divided}
For any $1\leq i,j \leq n$, for any $0\leq m \leq LCP(i,j)$, it holds that $LCP(i+m,j+m)+m = LCP(i,j)$.
\end{lemma}
\begin{proof}
This follows directly from the definition of LCP.
\end{proof}

We now proceed on to prove that for any $(i,j)\in P_{\textit{final}}$, $LCP(i_0,j_0) = i - i_0$. Lemma~\ref{lemma:bound_per_round} shows that the algorithm behaves as expected during the $\log b$ rounds, and Lemma~\ref{lemma:final_round} proves that the work done in the final round suffices for computing the LCPs.

\begin{lemma}\label{lemma:bound_per_round}
At round $k$, for any $(i_k,j_k)\in P_k$, $i_k-i_0 \leq LCP(i_0,j_0)\leq i_k-i_0+2^{\log n -k}$, assuming the fingerprints do not give a false positive.
\end{lemma}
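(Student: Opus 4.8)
The plan is to argue by induction on the round index $k$, carrying \emph{both} inequalities simultaneously; as will become clear, the lower bound $i_k-i_0\le LCP(i_0,j_0)$ is precisely what licenses the use of Lemma~\ref{lemma:LCP_divided} needed for the upper bound, so neither half can be proved in isolation. Throughout I use the structural fact that forming $P_k$ from its predecessor either copies a pair unchanged or advances both of its coordinates by the same amount; hence $i_k-i_0=j_k-j_0$ always, and I write $m:=i_k-i_0$ for this common amount of ``progress''. The base case is $k=0$, where $P_0=P$, $m=0$, and the claim reads $0\le LCP(i_0,j_0)\le 2^{\log n}=n$, which is trivially true.

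For the inductive step, suppose the statement holds for the ancestor $(i_{k-1},j_{k-1})$ of $(i_k,j_k)$, and write $m':=i_{k-1}-i_0$. The inductive lower bound $m'\le LCP(i_0,j_0)$ allows me to apply Lemma~\ref{lemma:LCP_divided} with shift $m'$, giving $LCP(i_{k-1},j_{k-1})=LCP(i_0,j_0)-m'$; together with the two inductive inequalities this yields
\[
0\le LCP(i_{k-1},j_{k-1})\le 2^{\log n-(k-1)}=2\cdot 2^{\log n-k}.
\]
I then split on the fingerprint comparison of the length-$2^{\log n-k}$ blocks starting at $i_{k-1}$ and $j_{k-1}$. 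If the fingerprints \emph{differ}, the blocks are genuinely unequal (a fingerprint mismatch is never a false positive), so $LCP(i_{k-1},j_{k-1})<2^{\log n-k}$ and the pair is copied, giving $i_k-i_0=m'$ and hence $m'\le LCP(i_0,j_0)<m'+2^{\log n-k}$, exactly the claim for $k$. If the fingerprints \emph{agree}, then under the no-false-positive hypothesis the blocks are equal, so $LCP(i_{k-1},j_{k-1})\ge 2^{\log n-k}$, the pair is advanced by $2^{\log n-k}$, and $i_k-i_0=m'+2^{\log n-k}$; the lower bound then follows from $LCP(i_0,j_0)=m'+LCP(i_{k-1},j_{k-1})\ge i_k-i_0$, and the upper bound from $LCP(i_0,j_0)\le m'+2^{\log n-(k-1)}=(i_k-i_0)+2^{\log n-k}$.

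The one delicate point, which I would highlight, is that the no-false-positive assumption is invoked \emph{only} in the second case: the ``differ'' case is unconditional because the Rabin--Karp test is one-sided, so equal substrings always produce equal fingerprints. This asymmetry is exactly why the lower bound propagates cleanly through every round, and it is the reason the lemma needs the caveat only for the direction in which equal fingerprints are taken as evidence of equal blocks. A minor matter to tidy up is the boundary behaviour when a block runs past position $n$; this is harmless and can be dispatched either by treating an out-of-range comparison as a forced mismatch or by the convention that an $LCP$ is capped by the length of the shorter suffix.
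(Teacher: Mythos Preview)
Your proof is correct and follows essentially the same route as the paper's own proof: induction on $k$, a case split on whether the length-$2^{\log n-k}$ fingerprints differ or agree, and in each branch the same combination of the inductive hypothesis with Lemma~\ref{lemma:LCP_divided}. Your explicit remark that the inductive lower bound is what licenses the use of Lemma~\ref{lemma:LCP_divided} for the upper bound, and your observation that the no-false-positive assumption is needed only in the ``agree'' branch, are welcome clarifications but do not change the argument.
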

\begin{proof}
The proof is by induction on $k$. For the base, $k=0$ and so $P_0 = P$ meaning that $i_k=i_0$. Therefore, $i_k-i_0 = 0 \leq LCP(i_0,j_0) \leq 2^{\log n} = n$, which is always true. For the step, we assume correctness for $k-1$ and we prove for $k$ as follows. By the induction hypothesis, for any $(i_{k-1},j_{k-1})\in P_{k-1}$, $i-i_0 \leq LCP(i_0,j_0)\leq i-i_0+2^{\log n -k+1}$. Let $(i_k,j_k)$ be the pair in $P_k$ corresponding to $(i_{k-1}, j_{k-1})$ in $P_{k-1}$. If $i_k = i_{k-1}$ then $LCP(i_{k-1},j_{k-1}) < 2^{\log n -k}$. Therefore,
\begin{align*}
i_k - i_0
& = i_{k-1}-i_0 \\
& \leq LCP(i_0,j_0) \\
& \leq i_{k-1} - i_0 + LCP(i_{k-1},j_{k-1}) \\
& \leq i_{k} - i_0 + 2^{\log n -k}.
\end{align*}

If $i_k = i_{k-1} + 2^{\log n -k}$ then $\textit{FP}[i,{i+2^{\log n -k}}] = \textit{FP}[j,{j+2^{\log n -k}}]$, and being as we assume that the fingerprints do not give produce false positives, $LCP(i_{k-1},j_{k-1} ) \geq 2^{\log n -k}$. Therefore,
\begin{align*}
i_k - i_0
& = i_{k-1} + 2^{\log n -k}-i_0 \\
& \leq i_{k-1}-i_0 + LCP(i_{k-1},j_{k-1}) \\
&= LCP(i_0,j_0) \\
& \leq i_{k-1} - i_0  + 2^{\log n -k+1}\\
& = i_k - i_0+ 2^{\log n -k},
\end{align*}
where the third equality holds from Lemma~\ref{lemma:LCP_divided}, and the fourth inequality holds as $LCP(i_0,j_0) =  i_{k-1}-i_0 + LCP(i_{k-1},j_{k-1})$ (which is the third equality), and $LCP(i_{k-1},j_{k-1})  \leq 2^{\log n -k+1}$ by the induction hypothesis.
\end{proof}

\begin{lemma}\label{lemma:final_round}
For any $(i,j)\in P_{\textit{final}}$, $LCP(i_0,j_0) = i-i_0 (=j-j_0)$.
\end{lemma}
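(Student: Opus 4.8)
The plan is to combine Lemma~\ref{lemma:bound_per_round}, instantiated at the final round $k=\log b$, with Lemma~\ref{lemma:LCP_divided}. Fix a pair $(i,j)\in P_{\textit{final}}$, let $(i',j')\in P_{\log b}$ be the pair from which it was constructed, so that $i=i'+LCP(i',j')$ and $j=j'+LCP(i',j')$, and let $(i_0,j_0)\in P$ be the originating pair. First I would record what Lemma~\ref{lemma:bound_per_round} gives at $k=\log b$, namely
$$i'-i_0 \leq LCP(i_0,j_0) \leq i'-i_0 + 2^{\log n - \log b} = i'-i_0 + \tfrac{n}{b}.$$
The point of the lower bound $i'-i_0 \leq LCP(i_0,j_0)$ is that it is exactly the hypothesis $0\leq m \leq LCP(i_0,j_0)$ required to invoke Lemma~\ref{lemma:LCP_divided} with the shift $m=i'-i_0$.

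Next I would observe that at every round the two coordinates of a pair are shifted by the same amount (either both are left unchanged or both are increased by $2^{\log n -k}$), so the cumulative shifts agree: $i'-i_0 = j'-j_0$, equivalently $j'=j_0+(i'-i_0)$. This bookkeeping step is small but essential, because it guarantees that shifting the originating pair $(i_0,j_0)$ by $m=i'-i_0$ lands exactly on $(i',j')$, which is what legitimises the application of Lemma~\ref{lemma:LCP_divided}. Applying that lemma with $m=i'-i_0$ then yields
$$LCP(i',j') + (i'-i_0) = LCP(i_0,j_0).$$

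Finally, since $LCP(i',j')\leq n/b$, the last phase of the algorithm computes $LCP(i',j')$ exactly, so $i=i'+LCP(i',j')$ is well defined, and therefore
$$i-i_0 = (i'-i_0)+LCP(i',j') = LCP(i_0,j_0),$$
as claimed. The parenthetical equality $i-i_0=j-j_0$ follows immediately from the same symmetry of the shifts, since $i-i_0$ and $j-j_0$ receive identical increments throughout all rounds and in the final construction of $P_{\textit{final}}$.

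The only real subtlety — and the main thing to get right — is the symmetry of the shifts across rounds, which simultaneously ensures that the hypothesis of Lemma~\ref{lemma:LCP_divided} is met and that the shifted originating pair coincides with $(i',j')$; everything else is a direct substitution of the two lemmas.
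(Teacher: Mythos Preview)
Your proof is correct and follows essentially the same approach as the paper: instantiate Lemma~\ref{lemma:bound_per_round} at $k=\log b$, then combine with Lemma~\ref{lemma:LCP_divided} and the definition of $P_{\textit{final}}$ to conclude. If anything, your write-up is more careful than the paper's, which leaves the symmetry of the shifts (i.e.\ $i'-i_0=j'-j_0$) and the explicit invocation of Lemma~\ref{lemma:LCP_divided} implicit.
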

\begin{proof}
Using Lemma~\ref{lemma:bound_per_round} with $k=\log b$ we have that for any $(i_{\log b},j_{\log b})\in P_{\log b}$, $i_{\log b} - i_0 \leq LCP(i_0,j_0) \leq i_{\log b} - i_0 +2^{\log n - \log b} = i_{\log b} - i_0 + \frac nb$. Being that $LCP(i_{\log b},j_{\log b}) \leq 2^{\log n - \log b}$ it must be that $LCP(i_0,j_0)  = i_{\log b} - i_0 + LCP(i_{\log b},j_{\log b}) $. Notice that $i_{\textit{final}} = i_{\log b} + LCP(i_{\log b},j_{\log b})$. Therefore, $LCP(i_0,j_0) = i_{\textit{final}} - i_0$ as required.
\end{proof}

Notice that the space used in each round is the set of pairs and the hash table for $L$, both of which require only $O(b)$ words of space. Thus, we have obtained the following.

\begin{theorem}\label{thm:batched_lcp}
It is possible to compute the $LCP$ of $b$ pairs of suffixes of a string $T$ of size $n$ in $O(n\log b)$ time using $O(b)$ space.
\end{theorem}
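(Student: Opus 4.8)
The plan is to assemble the theorem from the three ingredients already in place: correctness, running time, and space usage. Correctness is essentially handed to us by Lemma~\ref{lemma:final_round}, so the first thing I would do is explain how reading the answers off $P_{\textit{final}}$ recovers $LCP(i,j)$ for every original pair in $P$. The round-to-round construction maintains a bijection associating each pair in $P_{\textit{final}}$ with a unique originating pair $(i_0,j_0)\in P$, and Lemma~\ref{lemma:final_round} guarantees $LCP(i_0,j_0)=i-i_0$. Hence I would simply iterate over $P_{\textit{final}}$ and output $i-i_0$ for each pair, at a total cost of $O(b)$ time.

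Second, for the running time I would account for the three phases separately. Within a single round the dominant cost is the left-to-right scan of $T$, taking $O(n)$ time; the per-pair fingerprint extractions and hash-table lookups add only $O(b)$, and precomputing $\sigma^{n/2^k}\bmod p$ by repeated squaring costs $O(\log n - k)=O(\log n)$ once per round. Since there are $\log b$ rounds, the rounds total $O(n\log b)$. The final small-string phase computes each remaining LCP by direct character comparison over a window of length at most $n/b$, for $O(n/b)$ per pair and $O(n)$ in aggregate over the $b$ pairs. Summing gives $O(n\log b)+O(n)+O(b)=O(n\log b)$.

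Third, for the space bound I would argue that the working set in any round consists only of the $O(b)$ pairs in $P_k$ together with the hash table on the set $L$ of size $4b$, plus a constant number of running fingerprints and the single stored power of $\sigma$. I would also note that the storage for round $k$ can be released once $P_{k+1}$ has been built, so the bound does not accumulate across rounds.

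The main obstacle, and really the crux of the whole design, is keeping the space at $O(b)$ even though fingerprint information is needed at positions spread across a length-$n$ text. The resolution is that each round requires fingerprints only at the $4b$ endpoints encoded in $L$, and these are harvested in a single streaming pass that records $\textit{FP}[1,\ell]$ only for the $O(b)$ positions $\ell\in L$; this is precisely what separates the method from a naive $O(n)$-space fingerprint table. Finally, since every fingerprint comparison is correct with high probability and there are only polynomially many such comparisons across all rounds, a union bound (as already noted in the Preliminaries) ensures the whole computation succeeds with probability $1-n^{-O(1)}$, which I would invoke to discharge the ``no false positive'' assumption used throughout Lemma~\ref{lemma:bound_per_round} and Lemma~\ref{lemma:final_round}.
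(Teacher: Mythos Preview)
Your proposal is correct and follows essentially the same approach as the paper: correctness via Lemma~\ref{lemma:final_round}, $O(n)$ per round over $\log b$ rounds plus an $O(n)$ final phase for the time bound, and $O(b)$ space from the pair set and the hash table on $L$. Your write-up is in fact more explicit than the paper's---notably in observing that round-$k$ storage is released before round $k{+}1$, and in invoking the union bound to discharge the no-false-positive assumption---but none of this deviates from the paper's argument.
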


We discuss several other time/space tradeoffs in Section~\ref{section:tradeoff}

\section{Constructing the Sparse Suffix Tree}\label{sec:SST_construction}
The procedure for constructing the sparse suffix tree using only $O(b)$ space is split into two stages. In the first stage, we lexicographically sort the $b$ suffixes. In the second stage, we compute the $LCP$ of every two consecutive suffixes in the ordered list, and use those LCPs to simulate a DFS traversal on the sparse suffix tree, constructing the sparse suffix tree as we go along.

\subsection{Stage 1: Suffix Sorting}
We can use batched LCP queries in order to compare $b$ pairs of suffixes, as once the LCP of two suffixes is known, deciding which of the two is lexicographically smaller than the other takes constant time by examining the first two characters that differ in said suffixes. So we are interested in performing roughly $O(\log b)$ sets of $b$ comparisons each in order to sort the suffixes, where each set of comparisons is performed via batched LCP queries. One way to do this is to simulate a sorting network on the $b$ suffixes of depth $\log b$~\cite{AKS83}. Unfortunately, such known networks have very large constants hidden in them, and are generally considered impractical~\cite{Paterson90}. There are some practical networks with depth $\log ^2 b$ such as~\cite{Batcher68}, however, we wish to do better.

What we chose to do is simulate the quick-sort algorithm by each time picking a random suffix called the pivot, and lexicographically comparing all of other $b-1$ suffixes to the pivot. Once a partition is made to the set of suffixes which are lexicographically smaller than the pivot, and the set of suffixes which are lexicographically larger than the pivot, we recursively sort each set in the partition with the following modification. Each level of the recursion tree is performed concurrently using one batched LCP query for the entire level. The number of comparisons performed in each level is always bounded by $O(b)$, so we may use Theorem~\ref{thm:batched_lcp}. Furthermore, with high probability, the number of levels in the randomized quicksort is $O(\log b)$. Thus the total amount of time spent, with high probability is $O(n\log^2 b)$. Notice that from a theoretical point of view, it is possible to have a deterministic runtime of the same magnitude using sorting networks.

Notice that once the suffixes have been sorted, then we have in fact computed the sparse suffix array for the $b$ suffixes. Hence we have obtained the following.

\begin{theorem}\label{thm:sparse_suffix_array}
There exists a randomized algorithm that with high probability constructs the sparse suffix array for a string $T$ of size $n$ and a set of any $b$ indices in $T$ in $O(n \log^2 b)$ time in the worst case.
\end{theorem}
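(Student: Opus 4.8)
The plan is to realize the lexicographic sort as a sequence of batches of pairwise comparisons, each batch being resolved by a single invocation of the batched LCP machinery of Theorem~\ref{thm:batched_lcp}. First I would observe that once $LCP(i,j)=\ell$ is known for two interesting indices $i,j$, their lexicographic order can be decided in $O(1)$ time: if $\ell$ equals the length of the shorter of $T_i,T_j$ then that suffix is a proper prefix of the other and is therefore lexicographically smaller, and otherwise the order is determined by comparing the single mismatching characters $t_{i+\ell}$ and $t_{j+\ell}$. Thus any set of $b$ comparisons requested simultaneously can be answered by first issuing the $b$ corresponding LCP queries as one batch (paying $O(n\log b)$ time and $O(b)$ space by Theorem~\ref{thm:batched_lcp}) and then spending $O(b)$ additional time on the character examinations.

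Next I would structure the sort so that the comparisons needed at each stage can indeed be issued all at once. Using randomized quicksort, at a fixed recursion depth the current collection of unsorted blocks partitions the $b$ indices, and within each block every element is compared against that block's pivot; hence the total number of comparisons at a single depth is at most $b-1 = O(b)$. I would collect all these $(\text{element}, \text{pivot})$ pairs across all blocks at the current depth into one list $P$ of size $O(b)$, resolve them with a single batched LCP query as above, and then use the answers to split every block about its pivot. Repeating this level by level performs the entire quicksort with exactly one batched LCP query per level.

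Then I would bound the number of levels. By the standard analysis of randomized quicksort, the recursion depth is $O(\log b)$ with high probability, so the algorithm performs $O(\log b)$ batched LCP queries, for a total running time of $O(\log b)\cdot O(n\log b)=O(n\log^2 b)$ with high probability. To obtain the bound in the worst case as stated, I would instead drive the comparisons by a depth-$O(\log b)$ sorting network such as~\cite{AKS83}: its $O(\log b)$ layers each consist of $O(b)$ compare-exchange operations, which again form one batch of size $O(b)$ resolvable by Theorem~\ref{thm:batched_lcp}, yielding a deterministic $O(\log b)$ levels and hence worst-case $O(n\log^2 b)$ time. For the space bound I would note that at any moment we store only the current $O(b)$ pairs, the hash table used internally by the batched query, and the $O(b)$ words of quicksort/network bookkeeping, so $O(b)$ words suffice throughout. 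Finally, the sorted order of the $b$ suffixes is by definition the sparse suffix array.

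The main subtlety, rather than a deep obstacle, is the interplay of the two probabilistic ingredients. Correctness here is only with high probability because the batched LCP queries rely on Rabin--Karp fingerprints, which can yield false positives; I would argue via a union bound over the $O(b\log b)$ fingerprint comparisons performed across all rounds of all batches that no false positive ever occurs with high probability, so that every comparison --- and therefore the final order --- is correct. When using randomized quicksort I must additionally union-bound its $O(\log b)$-depth guarantee against the fingerprint-failure event, whereas the sorting-network variant removes the depth randomness entirely and isolates the ``with high probability'' purely to fingerprint correctness, matching the ``worst case'' phrasing of the statement. The only other point worth checking carefully is the prefix edge case in the comparison step, which is exactly why I handle the case $\ell=\min(|T_i|,|T_j|)$ explicitly above.
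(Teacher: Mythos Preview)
Your proposal is correct and follows essentially the same approach as the paper: batch the comparisons level-by-level through randomized quicksort (with the AKS sorting network noted as the deterministic-depth alternative), resolving each level's $O(b)$ pivot comparisons via a single invocation of Theorem~\ref{thm:batched_lcp}. If anything, your write-up is slightly more careful than the paper's, since you explicitly handle the prefix edge case in the character comparison and spell out the union bound over fingerprint failures.
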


\subsection{Stage 2: Traversing the Sparse Suffix Tree}
Let $S=\{T_{i_1},\cdots T_{i_b}\}$ be the ordered list of suffixes for which we wish to construct the sparse suffix tree. Then we begin by computing $LCP(i_j,i_{j+1})$ for all $1\leq j \leq b-1$. This takes $O(n\log b)$ time using Theorem~~\ref{thm:batched_lcp}. Now we wish to simulate a DFS traversal on the sparse suffix tree in order to construct it. This is done as follows.

The algorithm begins by creating a node which will be the root of the sparse suffix tree, and denoted by $r$. Denote by $Q_j\subset S$ the set of first $j-1$ suffixes in $S$, taken by lexicographical order. We will iteratively construct the sparse suffix tree for $Q_j$ for each $1\leq j\leq b$. Denote the sparse suffix tree for $Q_j$ by $ST_j$. For $j=1$, $ST_1$ is simply $r$ with one child that is the single node for $T_{i_1}$. Assume we have $ST_{j-1}$; we show how to use it to construct $ST_j$. We need to locate the location of the node $u$ which will be the lowest common ancestor of the leaf corresponding to $T_{i_{j-1}}$ and the leaf corresponding to $T_{i_j}$. To do this we traverse the path in $ST_{j-1}$ from the leaf corresponding to $T_{i_{j-1}}$ to $r$, and each time we reach a node $v$ on this path, we compare the length of its label $length(v)$ to $LCP(i_{j-1},i_j)$. If the two are equal, then this is the node $u$ we are searching for, and we insert $T_{i_j}$ as a child of this node. If $length(v) > LCP(i_{j-1},i_j)$ then we need to continue up the path. If $length(v) < LCP(i_{j-1},i_j)$ then the node $u$ needs to be inserted as a child of $u$, breaking the edge going from $v$ towards the leaf corresponding to $T_{i_{j-1}}$, which is the node we previously encountered while traversing the path. When $u$ is inserted, we set $length(u) \leftarrow LCP({i_{j-1}},{i_j})$, and add the leaf corresponding to $T_{i_j}$ as a child of $u$. Notice that the $label(r)=0$ so this process will in the worst case end at $r$, with $u=r$. 

This process simulates a DFS search on the sparse suffix tree, and so the total time cost for this DFS is $O(b)$. Thus we have obtained the following.

\begin{theorem}\label{thm:sparse_suffix_tree}
There exists a randomized algorithm that with high probability constructs the sparse suffix tree for a string $T$ of size $n$ and a set of any $b$ indices in $T$ in $O(n \log^2 b)$ time in the worst case.
\end{theorem}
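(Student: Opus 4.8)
The plan is to build the sparse suffix tree in two stages, combining Theorem~\ref{thm:sparse_suffix_array} with a simulated DFS traversal. First I would invoke Theorem~\ref{thm:sparse_suffix_array} to sort the $b$ interesting suffixes lexicographically in $O(n\log^2 b)$ time using $O(b)$ space; this yields the ordered list $S=\{T_{i_1},\dots,T_{i_b}\}$, which is exactly the sparse suffix array. The key observation is that a suffix tree is completely determined by the lexicographic order of the suffixes together with the LCP values of lexicographically adjacent suffixes, so the next step is to compute $LCP(i_j,i_{j+1})$ for all $1\le j\le b-1$. This is a single batch of $b-1$ LCP queries, which Theorem~\ref{thm:batched_lcp} answers in $O(n\log b)$ time and $O(b)$ space.

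The core of the argument is then to construct the tree incrementally, building $ST_j$ (the sparse suffix tree of the first $j$ suffixes in sorted order) from $ST_{j-1}$ by inserting the leaf for $T_{i_j}$. Since the suffixes are processed in lexicographic order, the branching point of $T_{i_j}$ must lie on the rightmost path of $ST_{j-1}$, i.e.\ the path from the leaf for $T_{i_{j-1}}$ up to the root $r$; its depth is precisely $LCP(i_{j-1},i_j)$. The insertion procedure therefore walks up this path comparing each node's $length(v)$ against $\ell:=LCP(i_{j-1},i_j)$, stopping at the node of depth exactly $\ell$ (attaching the new leaf as a child) or splitting the edge where the path depth crosses $\ell$ (creating a new internal node $u$ with $length(u)=\ell$ and attaching both the old subtree and the new leaf beneath it). Because $length(r)=0\le\ell$, the upward walk always terminates, so the procedure is well defined for every $j$.

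For correctness I would argue that this sequence of insertions simulates a DFS/right-spine traversal: the nodes popped off the rightmost path during the insertion of $T_{i_j}$ are exactly those that can never again be on the rightmost path (all future suffixes are lexicographically larger and share at least as long a prefix as the branching depth), so each node is pushed and popped at most once. This is the crux of the running-time analysis, and the main obstacle is making the amortization precise: naively the upward walk at step $j$ could traverse many nodes, but a standard potential/amortized argument shows the total number of node visits across all $b$ insertions is $O(b)$, since each internal node is created once and removed from the active path once. Hence Stage~2 costs $O(b)$ time and $O(b)$ space.

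Combining the three bounds gives a total running time of $O(n\log^2 b)+O(n\log b)+O(b)=O(n\log^2 b)$ and total space $O(b)$. Since Stage~1 is the only randomized component and succeeds with high probability by Theorem~\ref{thm:sparse_suffix_array}, the resulting algorithm constructs the sparse suffix tree with high probability within the claimed worst-case time bound, establishing the theorem.
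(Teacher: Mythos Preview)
Your proposal is correct and follows essentially the same two-stage approach as the paper: sort via Theorem~\ref{thm:sparse_suffix_array}, compute adjacent LCPs via Theorem~\ref{thm:batched_lcp}, then build the tree by the rightmost-path insertion that simulates a DFS, with the same amortized $O(b)$ bound. One small slip: Stage~2's batched LCP computation is also randomized (it relies on fingerprints), so it is not quite true that Stage~1 is the only randomized component, but this does not affect the argument since both succeed with high probability.
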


\section{Time-Space Tradeoffs for Batched LCP Queries}\label{section:tradeoff}
We provide an overview of the technique used to obtain the time-space tradeoff for the batched LCP process, as it closely follow those of Section~\ref{section:batched_lcp}. In Section~\ref{section:batched_lcp} the algorithm simulates concurrent binary searches in order to determine the  $LCP$ of each input pair (with some extra work at the end). The idea for obtaining the tradeoff is to generalize the binary search to an $\alpha$-ary search. So in the $k^{\text{th}}$ round the input is a set of $b$ pairs denoted by $P_k$, where we are guaranteed that for any $(i,j)\in P_k, LCP(i,j)\leq 2^{\log n - (k-1)\log \alpha}$, and the goal of the $k^{\text{th}}$ iteration is to decide for any $(i,j)\in P_k$ if $LCP(i,j)\leq 2^{\log n -k\log \alpha}$ or not. From a space perspective, this means that we need $O(\alpha b)$ space in order to compute $\alpha$ fingerprints per each index in any $(i,j)\in P_k$. From a time perspective, we only need to perform $O(\log_{\alpha} b)$ rounds before we may begin the final round. However, each round now costs $O(n+\alpha b)$. So the total cost for a batched LCP query is $O(\log_{\alpha} b (n+\alpha b)) = O(n\frac{\log b}{\log \alpha} + \frac{\alpha b\log b}{\log \alpha})$, and the total time cost for constructing the sparse suffix tree is $O(n\frac{\log^2 b}{\log \alpha} + \frac{\alpha b\log^2 b}{\log \alpha})$.

If, for example, $\alpha = b^\veps$ for a small constant $\veps >0$, then the cost for constructing the sparse suffix tree becomes $O(\frac{1}{\veps}(n\log b + b^{1+\veps}\log b))$, using $O(b^{1+\veps})$ words of space.

{\small
\bibliographystyle{alphainit}
\bibliography{tsvi}
}

\end{document}